\theoremstyle{plain}
\newtheorem{thm}{Theorem}[section]
  \theoremstyle{definition}
  \newtheorem{defn}[thm]{Definition}
  \theoremstyle{remark}
  \newtheorem{claim}[thm]{Claim}
  \theoremstyle{plain}
  \newtheorem{fact}[thm]{Fact}
  \theoremstyle{plain}
  \newtheorem{lem}[thm]{Lemma}
  \newcounter{casectr}
  \newenvironment{caseenv}
  {\begin{list}{{\itshape\ Case} \arabic{casectr}.}{%
   \setlength{\leftmargin}{\labelwidth}
   \addtolength{\leftmargin}{\parskip}
   \setlength{\itemindent}{\listparindent}
   \setlength{\itemsep}{\medskipamount}
   \setlength{\topsep}{\itemsep}}
   \setcounter{casectr}{0}
   \usecounter{casectr}}
  {\end{list}}
  \theoremstyle{remark}
  \newtheorem*{rem*}{Remark}
  \theoremstyle{plain}
  \newtheorem{cor}[thm]{Corollary}
  \theoremstyle{plain}
  \newtheorem{prop}[thm]{Proposition}
\begin{document}

\title{The hat problem on a directed graph}

\author{Rani Hod%
\thanks{School of Computer Science, Raymond and Beverly Sackler Faculty of
Exact Sciences, Tel Aviv University, Tel Aviv, Israel. E-mail: rani.hod@cs.tau.ac.il.
Research supported by an ERC advanced grant.%
} \and Marcin Krzywkowski%
\thanks{Faculty of Applied Physics and Mathematics, Gda\'{n}sk University
of Technology, Narutowicza 11/12, 80--952 Gda\'{n}sk, Poland. E-mail:
fevernova@wp.pl%
}}

\date{}
\maketitle
\begin{abstract}
A team of players plays the following game. After a strategy session,
each player is randomly fitted with a blue or red hat. Then, without
further communication, everybody can try to guess simultaneously his
or her own hat color by looking at the hat colors of other players.
Visibility is defined by a directed graph; that is, vertices correspond
to players, and a player can see each player to whom she or he is
connected by an arc. The team wins if at least one player guesses
his hat color correctly, and no one guesses his hat color wrong; otherwise
the team loses. The team aims to maximize the probability of a win,
and this maximum is called the hat number of the graph.

Previous works focused on the problem on complete graphs and on undirected
graphs. Some cases were solved, e.g., complete graphs of certain orders,
trees, cycles, bipartite graphs. These led Uriel Feige to conjecture
that the hat number of any graph is equal to the hat number of its
maximum clique.

We show that the conjecture does not hold for directed graphs, and
build, for any fixed clique number, a family of directed graphs of
asymptotically optimal hat number. We also determine the hat number
of tournaments to be one half.

\noindent \textbf{Keywords:} hat problem, directed graph, skeleton,
clique number.

\noindent \textbf{\AmS\; Subject Classification:} 05C20, 05C69, 91A12,
91A43. 
\end{abstract}
\global\long\def\prob#1{\mathbb{P}\left(#1\right)}
\global\long\def\ceil#1{\left\lceil #1\right\rceil }
 \global\long\def\floor#1{\left\lfloor #1\right\rfloor }
\global\long\def\skel#1{\mathrm{skel}\left(#1\right)}
%
{}

\section{\label{sec:introduction}Introduction}

\noindent In the hat problem, a team of $n$ players enters a room
and a blue or red hat is randomly and independently placed on the
head of each player. Each player can see the hats of all of the other
players but not his own. No communication of any sort is allowed,
except for an initial strategy session before the game begins. Once
they have had a chance to look at the other hats, each player must
simultaneously guess the color of his own hat or pass. The team wins
if at least one player guesses his hat color correctly and no one
guesses his hat color wrong; otherwise the team loses. The aim is
to maximize the probability of winning.

\paragraph{Origin.}

The hat problem with seven players, called the {}``seven prisoners
puzzle'', was formulated by Todd Ebert in his Ph.~D. Thesis~\cite{Ebert-phd}.
It is often posed as a puzzle (e.g., in the Berkeley Riddles~\cite{Berkeley})
and was also the subject of articles in the popular media~\cite{Blum-diezeit,Pol-abcnews,Rob-nytimes}.

The hat problem with $q\ge2$ possible colors was investigated in~\cite{LS02}.
Alon~\cite{Alon08} proved that the $q$-ary hat number of the complete
graph tends to one as the graph grows.

Many other variations of the problems exist, among them a random but
non-uniform hat color distribution~\cite{GKRT07}, an adversarial
allocation of hat from a pool known by the players~\cite{Feige04},
a variation in which passing is not allowed~\cite{BHKL08}, and many
more.

\paragraph{Our focus.}

We consider the hat problem on a graph, where vertices correspond
to players and a player can see each player to whom he is connected
by an edge. We seek to determine the hat number of the graph, that
is, the maximal chance of success for the hat problem in it. This
variation of the hat problem was first considered in~\cite{Krzy-trees}.

Note that the hat problem on the complete graph is equivalent to the
original hat problem. This case was solved for $2^{k}-1$ players
in~\cite{EMV03} and for $2^{k}$ players in~\cite{CHLL-book}.
In~\cite{LS02} it was shown that a strategy for $n$ players in
the complete graph is equivalent to a covering code of radius 1 in
the Hamming cube.

The hat problem was solved for trees~\cite{Krzy-trees}, cycles~\cite{Feige09+,Krzy-C4,Krzy-C5},
bipartite graphs~\cite{Feige09+}, perfect graphs~\cite{Feige09+},
and planar graphs containing a triangle~\cite{Feige09+}. Feige~\cite{Feige09+}
conjectured that for any graph the hat number is equal to the hat
number of its maximum clique. He proved this for graphs with clique
number $2^{k}-1$. The simplest remaining open case is thus triangle-free
graphs.

In this paper we consider the hat problem on directed graphs. Under
an appropriate definition of the clique number for directed graphs,
we construct families of digraphs with a~fixed clique number the
hat number of which is asymptotically optimal.

\section{\label{sec:preliminaries}Preliminaries}

We begin with some definitions regarding directed graphs (digraphs)
and undirected graphs.
\begin{defn}
The \emph{skeleton} of a digraph $D=\left(V,A\right)$, denoted by
$\skel D$, is the undirected graph on the vertex set $V$ in which
$x$ and $y$ are adjacent if both arcs between them belong to the
set $A$; that is, if they form a directed $2$-cycle in $D$. 
\end{defn}

\begin{defn}
The \emph{clique number} of a digraph $D$ is the clique number of
its skeleton; that is, $\omega\left(D\right)=\omega\left(\skel D\right)$.
\end{defn}

\begin{defn}
The \emph{transpose} of a digraph $D=\left(V,A\right)$ is the digraph
$D^{t}=\left(V,A^{t}\right)$, where $A^{t}=\left\{ \left(x,y\right):\left(y,x\right)\in A\right\} $. 
\end{defn}
Slightly abusing notation, we identify a digraph $D$ with its (undirected)
skeleton in the case that $D=D^{t}$; that is, if all arcs of $D$
have anti-parallel counterparts.

\bigskip{}

Fix a digraph $D=\left(V,A\right)$ on the vertex set $V=\left\{ v_{1},v_{2},\ldots,v_{n}\right\} $.
We proceed with a more precise definition of the hat problem on $D$.
\begin{defn}
A (hat) configuration is a function $c:V\to\left\{ \textrm{blue},\textrm{red}\right\} $,
assigning the hat color $c\left(v\right)$ to the vertex $v\in V$.
Naturally, there are $2^{n}$ possible configurations. 
\end{defn}

\begin{defn}
The view of a vertex $v\in V$ of a configuration $c:V\to\left\{ \textrm{blue},\textrm{red}\right\} $
is the restriction of $c$ to vertices seen by $v$, namely the function
$c^{v}=c|_{N^{+}\left(v\right)}$. Since the domain of $c^{v}$ is
$N^{+}\left(v\right)$, a set of size $d^{+}\left(v\right)$, the
number of possible views for $v$ is $2^{d^{+}\left(v\right)}$. Note
that $2^{n-d^{+}\left(v\right)}$ different configurations share any
single view of $v$.
\end{defn}
Sometimes we will regard configurations and views as binary vectors
of the respective length; that is, $c\in\left\{ \textrm{blue},\textrm{red}\right\} ^{n}$
and $c^{v}\in\left\{ \textrm{blue},\textrm{red}\right\} ^{d^{+}\left(v\right)}$
.
\begin{defn}
An individual strategy for the vertex $v\in V$ is a function mapping
views to guesses; that is, $g^{v}:\left\{ \textrm{blue},\textrm{red}\right\} ^{d^{+}\left(v\right)}\to\left\{ \textrm{blue},\textrm{red},\textrm{pass}\right\} $.
A (team) strategy is a sequence $\mathcal{S}=\left(g^{1},\ldots,g^{n}\right)$
of $n$ individual strategies, where $g^{i}$ is a strategy for $v_{i}$. 
\end{defn}

\begin{defn}
For a configuration $c\in\left\{ \textrm{blue},\textrm{red}\right\} ^{n}$
and an individual strategy $g^{v}$ for a vertex $v\in V$, we say
that $v$ guesses correctly if $g^{v}\left(c^{v}\right)=c\left(v\right)$
and guesses wrong if $g^{v}\left(c^{v}\right)\notin\left\{ \textrm{pass},c\left(v\right)\right\} $.
For a configuration $c\in\left\{ \textrm{blue},\textrm{red}\right\} ^{n}$
and a strategy $\mathcal{S}$, we say that the team wins if at least
one vertex guesses correctly and no vertex guesses wrong.
\end{defn}

\begin{defn}
The chance of success $\prob{\mathcal{S}}$ of a strategy $\mathcal{S}$
is the probability that the team wins, using $\mathcal{S}$, at a
configuration selected uniformly at random from $\left\{ \textrm{blue},\textrm{red}\right\} ^{n}$.
The hat number of the digraph $D$ is the maximum  $h\left(D\right)=\max_{\mathcal{S}}\prob{\mathcal{S}}$.
A strategy $\mathcal{S}$ is optimal for $D$ if $\prob{\mathcal{S}}=h\left(D\right)$.
\end{defn}
By solving the hat problem on a digraph $D$ we mean finding $h\left(D\right)$.

\bigskip{}

The hat problem on undirected graphs was treated in~\cite{Feige09+,Krzy-trees}.
We now cite four claims that generalize to digraphs with little or
no change.
\begin{claim}
\label{clm:subgraph}For every two digraphs $D$ and $E$ such that
$D\subseteq E$ we have $h\left(D\right)\le h\left(E\right)$.
\end{claim}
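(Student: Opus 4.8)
The plan is to exhibit, for any team strategy on $D$, a team strategy on $E$ with the same chance of success; optimizing over strategies on $D$ then yields $h\left(D\right)\le h\left(E\right)$. Write $D=\left(V_{D},A_{D}\right)$ and $E=\left(V_{E},A_{E}\right)$, so that $V_{D}\subseteq V_{E}$ and $A_{D}\subseteq A_{E}$. First I would lift a given strategy $\mathcal{S}=\left(g^{v}\right)_{v\in V_{D}}$ on $D$ to a strategy $\mathcal{S}'$ on $E$ by letting every vertex of $V_{E}\setminus V_{D}$ always pass, and letting every vertex $v\in V_{D}$ apply its old individual strategy $g^{v}$ to the restriction of its $E$-view to $N_{D}^{+}\left(v\right)$. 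This is legitimate because $A_{D}\subseteq A_{E}$ gives $N_{D}^{+}\left(v\right)\subseteq N_{E}^{+}\left(v\right)$, so a vertex observing its out-neighbours in $E$ can in particular recover what it would have seen in $D$.

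The crux is to identify the winning events. Fix a configuration $c$ on $V_{E}$ and let $c|_{V_{D}}$ be its restriction to $V_{D}$. Under $\mathcal{S}'$ the added vertices pass, so they neither guess correctly nor guess wrong; each $v\in V_{D}$ outputs exactly the guess it makes under $\mathcal{S}$ at the configuration $c|_{V_{D}}$, since $N_{D}^{+}\left(v\right)\subseteq V_{D}$ forces the two inputs $\left(c|_{N_{E}^{+}\left(v\right)}\right)|_{N_{D}^{+}\left(v\right)}$ and $\left(c|_{V_{D}}\right)|_{N_{D}^{+}\left(v\right)}$ to coincide with $c|_{N_{D}^{+}\left(v\right)}$. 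Hence the vertices guessing correctly, and those guessing wrong, are the same for $\left(\mathcal{S}',c\right)$ as for $\left(\mathcal{S},c|_{V_{D}}\right)$. Since a team wins precisely when at least one vertex is correct and none is wrong, $\mathcal{S}'$ wins at $c$ if and only if $\mathcal{S}$ wins at $c|_{V_{D}}$.

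Finally I would compare the two chances of success. The event that $\mathcal{S}'$ wins at $c$ depends on $c$ only through $c|_{V_{D}}$, and exactly $2^{\left|V_{E}\right|-\left|V_{D}\right|}$ configurations on $V_{E}$ restrict to each configuration on $V_{D}$; thus restricting a uniformly random configuration on $V_{E}$ produces a uniformly random configuration on $V_{D}$. Therefore $\prob{\mathcal{S}'}=\prob{\mathcal{S}}$, and taking $\mathcal{S}$ optimal gives $h\left(E\right)\ge\prob{\mathcal{S}'}=h\left(D\right)$.

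I do not expect a genuine obstacle: this is a routine lifting argument whose two delicate points are, first, that idle (passing) vertices cannot spoil a win, which relies on the winning condition forbidding any wrong guess, and second, that the chance of success is preserved \emph{exactly}, not merely bounded, which is the marginalization identity for the uniform distribution noted above. I would take care to state this identity explicitly, as it is what upgrades the inequality $h\left(E\right)\ge\prob{\mathcal{S}'}$ into the desired bound in terms of $h\left(D\right)$.
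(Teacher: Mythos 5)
Your proposal is correct and is exactly the standard lifting argument that the paper implicitly relies on (the paper states this claim without proof, citing that it carries over from the undirected case with little or no change): new vertices pass, old vertices ignore the extra arcs, and the win event depends only on the restriction of the configuration to $V_{D}$. Both delicate points you flag --- that passing vertices cannot cause a loss, and that the uniform marginal makes the success probability carry over exactly --- are handled correctly.
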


\begin{claim}
\label{clm:at-least-half}For every digraph $D$ we have $h\left(D\right)\ge1/2$.
\end{claim}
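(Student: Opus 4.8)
The plan is to prove the lower bound by exhibiting a single explicit strategy whose chance of success equals exactly $1/2$, for an arbitrary digraph $D=\left(V,A\right)$ with $V=\left\{ v_{1},\ldots,v_{n}\right\}$. Crucially, the construction will not depend on the arc set $A$ at all, so it applies uniformly to every digraph.

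The strategy I would use is the \emph{dictator} strategy. Designate one vertex, say $v_{1}$, as the sole active player. Its individual strategy $g^{1}$ ignores the view entirely and always outputs a fixed color, say $g^{1}\equiv\textrm{blue}$; every other vertex passes, i.e.\ $g^{i}\equiv\textrm{pass}$ for all $2\le i\le n$. Note that this is a legitimate individual strategy for $v_{1}$ regardless of $d^{+}\left(v_{1}\right)$, since a constant function is a valid map from views to guesses even when $v_{1}$ sees no one.

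To compute the chance of success, I would split on the color of $v_{1}$'s hat under a configuration $c$ drawn uniformly from $\left\{ \textrm{blue},\textrm{red}\right\}^{n}$. If $c\left(v_{1}\right)=\textrm{blue}$, then $v_{1}$ guesses correctly and, since all other vertices pass, no vertex guesses wrong; hence the team wins. If $c\left(v_{1}\right)=\textrm{red}$, then $v_{1}$ guesses wrong and the team loses. Thus the team wins precisely on those configurations with $c\left(v_{1}\right)=\textrm{blue}$, an event of probability exactly $1/2$.

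Letting $\mathcal{S}=\left(g^{1},\ldots,g^{n}\right)$ denote this strategy, we conclude $\prob{\mathcal{S}}=1/2$, and therefore $h\left(D\right)=\max_{\mathcal{S}'}\prob{\mathcal{S}'}\ge\prob{\mathcal{S}}=1/2$, as claimed. There is essentially no obstacle here: the only point worth emphasizing is that the bound is attained without using any visibility, which is why it holds even for the empty digraph and hence for every digraph by the same one-line argument.
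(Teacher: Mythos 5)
Your proof is correct and is exactly the standard argument; the paper omits a proof of this claim, citing that it carries over from the undirected case, where the same ``one designated player always guesses a fixed color, everyone else passes'' strategy gives precisely $1/2$.
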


\begin{claim}
\label{clm:always-guess}Let $D$ be a digraph and let $v$ be a vertex
of $D$. If $\mathcal{S}$ is a strategy for $D$ in which $v$ always
attempts to guess its color, then $\prob{\mathcal{S}}\le1/2$.
\end{claim}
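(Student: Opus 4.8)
The plan is to use a pairing (involution) argument on the set of $2^{n}$ configurations, grouping them according to the hat color of $v$. The crucial observation is that the view $c^{v}=c|_{N^{+}\left(v\right)}$ of the distinguished vertex $v$ does not depend on the color $c\left(v\right)$, since $v$ never sees its own hat. Consequently, flipping only $v$'s hat leaves $c^{v}$ unchanged, so $v$ produces exactly the same guess $g^{v}\left(c^{v}\right)$ in both configurations of such a flipped pair.

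First I would partition the $2^{n}$ configurations into $2^{n-1}$ disjoint pairs $\left\{ c_{0},c_{1}\right\} $, where $c_{0}$ and $c_{1}$ agree on every vertex except $v$, with $c_{0}\left(v\right)=\textrm{blue}$ and $c_{1}\left(v\right)=\textrm{red}$. Fix such a pair and let $x=g^{v}\left(c^{v}\right)$ be the common guess of $v$ on it. By hypothesis $v$ always attempts to guess, so $x\in\left\{ \textrm{blue},\textrm{red}\right\} $, i.e.\ $x$ is never $\textrm{pass}$. Since $c_{0}\left(v\right)$ and $c_{1}\left(v\right)$ are distinct colors, exactly one of them equals $x$; in the other configuration of the pair, $v$ guesses a color different from its actual hat color, which by definition means $v$ guesses wrong.

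Next I would invoke the winning condition: the team wins on a configuration only if \emph{no} vertex guesses wrong. Hence, in each of the $2^{n-1}$ pairs, the configuration on which $v$ guesses wrong is necessarily a loss for the team. Therefore each pair contributes at most one winning configuration, and the total number of winning configurations under $\mathcal{S}$ is at most $2^{n-1}$. Dividing by the $2^{n}$ equiprobable configurations yields $\prob{\mathcal{S}}\le 2^{n-1}/2^{n}=1/2$, as claimed.

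I do not anticipate a genuine obstacle here, as the argument is a clean counting bound; the only point requiring care is the justification that $v$'s guess is constant across a flipped pair, which rests entirely on the fact that $v\notin N^{+}\left(v\right)$ and hence $v$'s view is blind to its own color. Everything else follows from the definition of a wrong guess and the all-or-nothing winning rule.
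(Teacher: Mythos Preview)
Your proof is correct and is exactly the standard pairing argument one expects here. Note, however, that the paper does not actually supply its own proof of this claim: it merely cites it (together with Claims~\ref{clm:subgraph}, \ref{clm:at-least-half}, and~\ref{clm:never-guess}) as a known fact from the undirected setting that carries over to digraphs without change, so there is nothing further to compare against.
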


\begin{claim}
\label{clm:never-guess}Let $D$ be a digraph and let $v$ be a vertex
of $D$. If $\mathcal{S}$ is an optimal strategy for $D$ in which
$v$ never attempts to guess its color, then $h\left(D\right)=h\left(D-v\right)$.
\end{claim}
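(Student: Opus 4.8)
The plan is to establish the two inequalities $h(D-v)\le h(D)$ and $h(D-v)\ge h(D)$ separately. The first is immediate: since $D-v$ is obtained by deleting $v$ together with all arcs incident to it, we have $D-v\subseteq D$, so Claim~\ref{clm:subgraph} gives $h(D-v)\le h(D)$. All the real content lies in the reverse inequality, and here I would use the hypothesis that $\mathcal{S}$ is an \emph{optimal} strategy in which $v$ always passes.

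First I would record that, because $v$ never guesses, the outcome of $\mathcal{S}$ on a configuration $c$ is decided entirely by the other vertices: the team wins iff some vertex $u\ne v$ guesses correctly and no $u\ne v$ guesses wrong. Writing each configuration as $(c',b)$, where $c'$ is the restriction to $V\setminus\{v\}$ and $b\in\{\textrm{blue},\textrm{red}\}$ is the color of $v$, I would split the set $W$ of winning configurations into $W_{\textrm{blue}}$ and $W_{\textrm{red}}$ according to the value of $b$, so that $|W|=|W_{\textrm{blue}}|+|W_{\textrm{red}}|$ and, by optimality, $h(D)=|W|/2^{n}$.

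Next I would convert $\mathcal{S}$ into a strategy $\mathcal{S}'$ for $D-v$. A vertex $u\ne v$ that does not see $v$ in $D$ keeps its individual strategy unchanged, since its out-neighborhood is unaffected by the deletion. A vertex $u$ that does see $v$ loses exactly that coordinate of its view in $D-v$; I would have it reconstruct the missing coordinate by pretending $v$ is blue and then applying its original $g^{u}$. With this definition, for every $c'$ the guesses issued under $\mathcal{S}'$ coincide with those issued under $\mathcal{S}$ on the completed configuration $(c',\textrm{blue})$. As $v$ contributed nothing under $\mathcal{S}$, the team wins under $\mathcal{S}'$ on $c'$ precisely when $(c',\textrm{blue})\in W$, that is, when $c'\in W_{\textrm{blue}}$; hence $\prob{\mathcal{S}'}=|W_{\textrm{blue}}|/2^{\,n-1}$.

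Finally, an averaging step closes the argument: choosing (after possibly swapping the two colors) the value $b$ for which $|W_{b}|$ is larger guarantees $|W_{\textrm{blue}}|\ge|W|/2$, so $\prob{\mathcal{S}'}=|W_{\textrm{blue}}|/2^{\,n-1}\ge|W|/2^{n}=h(D)$, whence $h(D-v)\ge h(D)$. The one place needing care, and what I would regard as the main obstacle, is verifying that hard-wiring $v$'s color to blue really reproduces the behaviour of $\mathcal{S}$ on $(c',\textrm{blue})$ for the vertices that formerly saw $v$; once that correspondence is confirmed, the pigeonhole bound $\max(|W_{\textrm{blue}}|,|W_{\textrm{red}}|)\ge|W|/2$ finishes the proof. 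Optimality of $\mathcal{S}$ is used only to identify $|W|/2^{n}$ with $h(D)$ rather than a smaller success probability.
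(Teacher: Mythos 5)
Your proof is correct. The paper itself gives no proof of this claim (it is cited as carrying over from the undirected case in \cite{Feige09+,Krzy-trees}), and your argument is exactly the standard one: condition on the color of $v$, hard-wire the majority color into the views of the vertices that see $v$ to obtain a strategy for $D-v$, and apply pigeonhole to get $\max\left(\left|W_{\textrm{blue}}\right|,\left|W_{\textrm{red}}\right|\right)\ge\left|W\right|/2$, with Claim~\ref{clm:subgraph} supplying the converse inequality.
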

Combining Claims~\ref{clm:at-least-half}, \ref{clm:always-guess}
and \ref{clm:never-guess} we get the following.
\begin{claim}
\label{clm:drop-blind-vertices}Let $D$ be a digraph and let $v$
be a vertex of $D$. If $v$ has no outgoing arcs, i.e., $d^{+}\left(v\right)=0$,
then $h\left(D\right)=h\left(D-v\right)$.
\end{claim}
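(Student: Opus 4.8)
The plan is to exploit the single observation that a vertex $v$ with $d^{+}\left(v\right)=0$ has no information whatsoever: its view $c^{v}$ is the restriction of $c$ to $N^{+}\left(v\right)=\emptyset$, so there is exactly one possible view and any individual strategy $g^{v}$ is forced to be a constant function, outputting one fixed element of $\left\{ \textrm{blue},\textrm{red},\textrm{pass}\right\} $ independently of the configuration. Consequently, in \emph{every} strategy for $D$ the vertex $v$ either always passes or always attempts to guess (a fixed color). This dichotomy is what lets the three cited claims do all the work.

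First I would dispose of one inequality cheaply: since $D-v\subseteq D$, Claim~\ref{clm:subgraph} immediately gives $h\left(D-v\right)\le h\left(D\right)$, so it remains only to prove $h\left(D\right)\le h\left(D-v\right)$.

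To this end, fix an optimal strategy $\mathcal{S}$ for $D$, so that $\prob{\mathcal{S}}=h\left(D\right)$, and split into the two cases afforded by the dichotomy above. If $v$ never guesses under $\mathcal{S}$, then Claim~\ref{clm:never-guess} applies verbatim and yields $h\left(D\right)=h\left(D-v\right)$, finishing the argument. If instead $v$ always attempts to guess, then Claim~\ref{clm:always-guess} forces $\prob{\mathcal{S}}\le1/2$; combining this with Claim~\ref{clm:at-least-half}, which gives $h\left(D\right)\ge1/2$, and with optimality $\prob{\mathcal{S}}=h\left(D\right)$, we conclude $h\left(D\right)=1/2$. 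Applying Claim~\ref{clm:at-least-half} to $D-v$ gives $h\left(D-v\right)\ge1/2=h\left(D\right)$, which together with the reverse inequality already established yields $h\left(D\right)=h\left(D-v\right)$.

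I do not expect a genuine obstacle here; the only step requiring care is the very first one---recognising that an out-degree-zero vertex is forced into a constant strategy---after which the result is just bookkeeping among the three cited claims. The mild subtlety worth flagging is that one cannot simply quote Claim~\ref{clm:never-guess} directly, since an optimal strategy \emph{might} have $v$ guessing; the role of Claims~\ref{clm:at-least-half} and~\ref{clm:always-guess} is precisely to cover that case by pinning $h\left(D\right)$ down to exactly $1/2$.
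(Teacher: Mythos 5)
Your proof is correct and is essentially the argument the paper intends: the paper gives no explicit proof, stating only that the claim follows by combining Claims~\ref{clm:at-least-half}, \ref{clm:always-guess} and~\ref{clm:never-guess}, and your case split on whether the (necessarily constant) strategy of $v$ guesses or passes is exactly that combination. The only addition is your explicit use of Claim~\ref{clm:subgraph} for the inequality $h\left(D-v\right)\le h\left(D\right)$ in the always-guess case, which is a harmless and correct supplement.
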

%
{}

\section{\label{sec:constructions}Constructions}

For an undirected graph $G$, it is known that if $G$ contains a
triangle, then $h\left(G\right)\ge3/4$, and it is conjectured in~\cite{Feige09+}
that if $G$ is triangle-free, then $h\left(G\right)=1/2$. Do directed
graphs introduce anything in between? The answer is yes.

\bigskip{}

Let us consider the hat problem on the digraph $D_{1}$ given in Figure~\ref{fig:5/8-example}.
\begin{fact}
$h(D_{1})=5/8$.
\end{fact}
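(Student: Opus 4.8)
The plan is to prove $h(D_1)\ge 5/8$ and $h(D_1)\le 5/8$ separately, working throughout with the $2^{3}=8$ configurations $c=(c_1,c_2,c_3)$ over $\{0,1\}$ (blue${}=0$, red${}=1$) and with the rule that a configuration is won exactly when some vertex guesses correctly and none guesses wrong. Since Claim~\ref{clm:at-least-half} only gives the trivial $h(D_1)\ge 1/2$, the lower bound needs an explicit strategy; and the upper bound must use the specific visibility of $D_1$, because the generic counting estimate (each guessing vertex is right on exactly half of the configurations compatible with its view, whence $\#\mathrm{wins}\le\sum_v\#\{v\ \mathrm{correct}\}=\sum_v\#\{v\ \mathrm{wrong}\}$) yields only $h\le 3/4$.

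For the lower bound I would exhibit one team strategy and verify it with a full $8$-row table. The design principle is that each vertex of out-degree two should guess only on those views for which its possible error lands on a configuration that is already lost for another reason, while the vertex of out-degree one is used to rescue the configurations on which the other two pass. Concretely one arranges that exactly three configurations fail---two because one designated vertex guesses wrong on them and one because all three vertices pass---leaving five configurations each with a correct guesser and no wrong guess.

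For the upper bound the natural route is a case analysis on the individual strategy of the vertex $w$ of out-degree one, which sees a single neighbour and hence has only two views. Fix an optimal strategy $\mathcal{S}$; by the lower bound $h(D_1)\ge 5/8>1/2$. If $w$ never guesses, Claim~\ref{clm:never-guess} gives $h(D_1)=h(D_1-w)=h(K_2)=1/2$, a contradiction; if $w$ always guesses, Claim~\ref{clm:always-guess} gives $\prob{\mathcal{S}}\le 1/2$, again impossible. So $w$ guesses on exactly one of its two views, and is therefore wrong on two configurations (those matching that view but differing in $c_w$): these are two forced losses. I would then show a third loss is unavoidable. Among the configurations on which $w$ passes there is a pair that must be handled entirely by the other two vertices; these two configurations share a view at one of those vertices, so a guess there rescues at most one of them and a pass rescues neither, while the only way to rescue them at the remaining vertex forces a wrong guess on a configuration where $w$ guessed \emph{correctly} and must be preserved. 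Either way a third configuration is lost, so $\prob{\mathcal{S}}\le 5/8$.

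The hard part is exactly this last case: it is where the limited visibility of $D_1$ (equivalently, the triangle-free skeleton) is used, and it requires tracking precisely which configurations coincide in each vertex's view. Granting it, the three cases give $h(D_1)\le 5/8$, and with the strategy of the second paragraph this yields $h(D_1)=5/8$.
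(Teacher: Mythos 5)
There are genuine gaps, and the first one is a misreading of the digraph $D_{1}$ itself. In $D_{1}=K_{1}\to K_{2}$ the single vertex $u$ sees \emph{both} $x$ and $y$ (out-degree two, hence four views), while $x$ and $y$ see only each other (out-degree one each). Your description inverts this: you speak of ``each vertex of out-degree two'' rescuing nothing and ``the vertex of out-degree one'' rescuing ``the configurations on which the other two pass,'' and your reduction $h(D_{1}-w)=h(K_{2})$ only makes sense for $w=u$, which has four views, so the never/always/exactly-one-of-two-views trichotomy does not apply to it. This is not merely a naming slip: it invalidates your lower-bound design. Since no vertex of $D_{1}$ sees $u$'s hat, the event ``all three vertices pass'' depends only on $(c_{x},c_{y})$, so the set of all-pass configurations has even cardinality; a strategy losing ``exactly one configuration because all three vertices pass'' therefore does not exist. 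The actual optimal strategy is the $m=2$ instance of the one in Lemma~\ref{lem:directed-union-lower-bound}: $x$ (resp.\ $y$) guesses red upon seeing $y$ (resp.\ $x$) in blue and passes otherwise, and $u$ guesses a fixed color upon seeing two red hats and passes otherwise; it loses the two configurations with $c_{x}=c_{y}=\textrm{blue}$ (both $x$ and $y$ wrong) and one configuration with $c_{x}=c_{y}=\textrm{red}$ ($u$ wrong), winning the other five.

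On the upper bound, the outline (case analysis never/always/sometimes on a vertex with two views, i.e.\ on $x$ or $y$, with the first two cases dispatched by Claims~\ref{clm:never-guess} and~\ref{clm:always-guess}) is sound, and your sketch of the remaining case points in the right direction, but you explicitly defer its proof (``Granting it\ldots''), and that deferred step is the entire content of the bound. Note also that $D_{1}-x$ is not $K_{2}$; one needs Claim~\ref{clm:drop-blind-vertices} to strip the now-blind vertex $y$ before concluding the hat number is $1/2$. For comparison, the paper deliberately omits a direct proof of this Fact: it obtains $h(D_{1})=5/8$ as the $n=1$ case of the Corollary, via Lemma~\ref{lem:directed-union-upper-bound} (whose third case is handled by a short probabilistic computation --- both $x$ and $y$ guess on exactly one view each, so they both pass with probability exactly $1/4$ and at least one errs with probability at least $1/4$ --- rather than by configuration chasing) together with Lemma~\ref{lem:directed-union-lower-bound}. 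Specializing those two lemmata to $D=K_{1}$, $m=2$ would give you a complete and shorter proof.
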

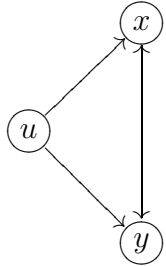
\begin{figure}[h]
\noindent \begin{centering}
\SelectTips{cm}{}
\xymatrix{
&*++[o][F]{x}\ar[dd]\\
*++[o][F]{u}\ar[dr]\ar[ur]\\
&*++[o][F]{y}\ar[uu]
}
\par\end{centering}

\caption{\label{fig:5/8-example}The directed graph $D_{1}$}

\end{figure}

We omit the proof of this fact in favor of extending $D_{1}$ to a
construction of a family $\left\{ D_{n}\right\} _{n=0}^{\infty}$
of semi-complete digraphs that asymptotically achieve hat number $2/3$,
with the property that the $\omega\left(D_{n}\right)=2$.%
\footnote{Moreover, the skeleton of $D_{n}$ is a matching of size $n$ plus
an isolated vertex. For short, we write $\skel{D_{n}}=nK_{2}\cup K_{1}$.%
}
\begin{defn}
Given two disjoint digraphs $C$ and $D$, we define the \emph{directed
union} of $C$ and $D$, denoted by $C\to D$, as the disjoint union
of these two digraphs with the additional arcs from all vertices of
$C$ to all vertices of $D$. Note that this operator is associative;
that is, $C\to\left(D\to E\right)=\left(C\to D\right)\to E$ for any
three digraphs $C$, $D$ and $E$. Thus, the notation $C\to D\to E$
is unambiguous. We denote the directed union of $n$ disjoint copies
of a digraph $D$ by $D^{\to n}=\underbrace{D\to D\to\cdots\to D}_{n}$.
\end{defn}
Expressed in the terms of directed union, $D_{1}=K_{1}\to K_{2}$.
We extend this to a family of digraphs by defining $D_{n}=K_{1}\to K_{2}^{\to n}$.
Note that the family $\left\{ D_{n}\right\} _{n=0}^{\infty}$ satisfies
the recurrence relation $D_{n+1}=D_{n}\to K_{2}$ for $n\in\mathbb{N}$.

In Figure~\ref{fig:2/3-family} we give examples of $D_{n}$ for
$n=2$, $n=3$, and a general $n$.

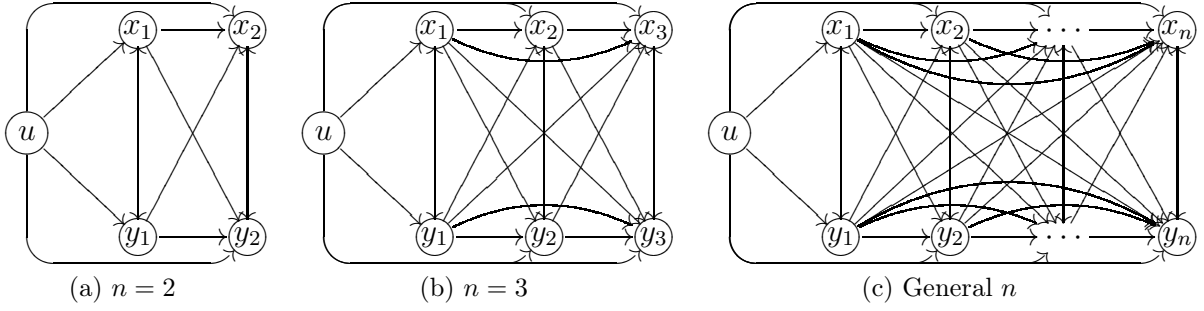
\begin{figure}[h]
\subfloat[$n=2$]{\SelectTips{cm}{}
\xymatrix{
&*+[o][F]{x_{1}}\ar[dd]\ar[r]\ar[ddr]&*+[o][F]{x_{2}}\ar[dd]\\
*++[o][F]{u}\ar[dr]\ar[ur]\ar`u_u[u]`_r[urr]`_rd[urr][urr]\ar`d_d[d]`^r[drr]`^ru[drr][drr]\\
&*+[o][F]{y_{1}}\ar[uu]\ar[r]\ar[uur]&*+[o][F]{y_{2}}\ar[uu]
}

}\subfloat[$n=3$]{\SelectTips{cm}{}
\xymatrix{
&*+[o][F]{x_{1}}\ar[dd]\ar[r]\ar[ddr]\ar@/_1pc/[rr]\ar[ddrr]
&*+[o][F]{x_{2}}\ar[dd]\ar[r]\ar[ddr]
&*+[o][F]{x_{3}}\ar[dd]\\
*++[o][F]{u}\ar[dr]\ar[ur]
\ar`u_u[u]`_r[urr]`_rd[urr][urr]
\ar`d_d[d]`^r[drr]`^ru[drr][drr]
\ar`u_u[u]`_r[urrr]`_rd[urrr][urrr]
\ar`d_d[d]`^r[drrr]`^ru[drrr][drrr]\\
&*+[o][F]{y_{1}}\ar[uu]\ar[r]\ar[uur]\ar@/^1pc/[rr]\ar[uurr]
&*+[o][F]{y_{2}}\ar[uu]\ar[r]\ar[uur]
&*+[o][F]{y_{3}}\ar[uu]
}

}\subfloat[General $n$]{\SelectTips{cm}{}
\xymatrix{
&*+[o][F]{x_{1}}\ar[dd]\ar[r]\ar[ddr]\ar@/_1pc/[rr]\ar[ddrr]\ar@/_4ex/[rrr]\ar[ddrrr]
&*+[o][F]{x_{2}}\ar[dd]\ar[r]\ar[ddr]\ar@/_1pc/[rr]\ar[ddrr]
&\cdots\ar[r]\ar[ddr]\ar[dd]
&*+[o][F]{x_{n}}\ar[dd]\\
*++[o][F]{u}\ar[dr]\ar[ur]
\ar`u_u[u]`_r[urr]`_rd[urr][urr]
\ar`d_d[d]`^r[drr]`^ru[drr][drr]
\ar`u_u[u]`_r[urrr]`_rd[urrr][urrr]
\ar`d_d[d]`^r[drrr]`^ru[drrr][drrr]
\ar`u_u[u]`_r[urrrr]`_rd[urrrr][urrrr]
\ar`d_d[d]`^r[drrrr]`^ru[drrrr][drrrr]\\
&*+[o][F]{y_{1}}\ar[uu]\ar[r]\ar[uur]\ar@/^1pc/[rr]\ar[uurr]\ar@/^4ex/[rrr]\ar[uurrr]
&*+[o][F]{y_{2}}\ar[uu]\ar[r]\ar[uur]\ar@/^1pc/[rr]\ar[uurr]
&\cdots\ar[r]\ar[uur]\ar[uu]
&*+[o][F]{y_{n}}\ar[uu]\\
}

}

\bigskip{}

\caption{\label{fig:2/3-family}The directed, semi-complete graphs $D_{2}$,
$D_{3}$, and $D_{n}$. All vertical arcs have anti-parallel counterparts.
The remaining arcs are rightwards}

\end{figure}
We proceed to compute the hat number of the digraphs of the family
$\{D_{n}\}_{n=0}^{\infty}$. First, we prove the upper bound. 
\begin{lem}
\label{lem:directed-union-upper-bound}For any digraph $D$ we have
$h\left(D\to K_{2}\right)\le\max\left\{ h\left(D\right),1/2+\left(1/4\right)h\left(D\right)\right\} $.\end{lem}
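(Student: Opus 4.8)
The plan is to fix an arbitrary strategy $\mathcal{S}$ for $D\to K_{2}$, bound $\prob{\mathcal{S}}$, and take the maximum over $\mathcal{S}$ at the end. Write $a,b$ for the two vertices of the appended $K_{2}$. The whole argument rests on one structural feature of the directed union: in $D\to K_{2}$ the vertices $a$ and $b$ see each other and nothing else, whereas every vertex of $D$ sees both $a$ and $b$ in addition to its $D$-out-neighbours. Hence the guesses of $a$ and $b$ depend only on the ordered pair of colours $\left(\alpha,\beta\right)=\left(c\left(a\right),c\left(b\right)\right)$, and I would condition on it. Writing $P_{\alpha\beta}$ for the conditional winning probability over a uniformly random colouring of $D$, we have $\prob{\mathcal{S}}=\tfrac{1}{4}\sum_{\alpha,\beta}P_{\alpha\beta}$, so it suffices to prove $\sum_{\alpha,\beta}P_{\alpha\beta}\le\max\left\{ 4h\left(D\right),\,2+h\left(D\right)\right\}$.

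First I would classify each of the four conditional events by the behaviour of $a$ and $b$. Fixing $\left(\alpha,\beta\right)$ turns the $D$-vertices' individual strategies into an ordinary team strategy $\mathcal{S}_{\alpha\beta}$ on $D$. If $a$ or $b$ guesses wrong, the team loses whatever $D$ does, so $P_{\alpha\beta}=0$ (type $0$). If neither of $a,b$ is wrong but at least one is correct, the team wins exactly when no $D$-vertex errs, so $P_{\alpha\beta}$ is the probability that $\mathcal{S}_{\alpha\beta}$ produces no wrong guess, which is at most $1$ (type $W$). Finally, if $a$ and $b$ both pass, the team wins exactly when $\mathcal{S}_{\alpha\beta}$ wins on $D$, so $P_{\alpha\beta}\le h\left(D\right)$ by the definition of the hat number (type $D$). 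Letting $n_{W}$ and $n_{D}$ count the pairs of each kind, the target inequality reduces to the purely combinatorial bound $n_{W}+n_{D}\,h\left(D\right)\le\max\left\{ 4h\left(D\right),\,2+h\left(D\right)\right\}$, where I use $h(D)\le 1$ freely.

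The last step is a finite optimisation over the behaviours of $a$ and $b$, which I would organise on the $2\times2$ grid of pairs $\left(\alpha,\beta\right)$: the vertex $a$ acts within the two columns indexed by $\beta$, and $b$ within the two rows indexed by $\alpha$. The key observation is that whenever $a$ guesses rather than passes on one of its two views, it is correct on one cell of that column and wrong on the other, and the wrong cell is automatically of type $0$; the same holds for $b$ across rows. Consequently, if $a$ guesses on both of its views it forces two distinct type-$0$ cells, whence $n_{W}+n_{D}\le2$ and the value is at most $2$; symmetrically if $b$ guesses on both views. In the remaining cases each of $a,b$ guesses on at most one view, and a short check of the few surviving patterns shows that either both pass (all four cells type $D$, value $4h\left(D\right)$) or the value is at most $2+h\left(D\right)$, the extremal pattern being the one in which $a$ and $b$ each guess on a single view and their two forced wrong cells coincide, giving $n_{W}=2$, $n_{D}=1$, $n_{0}=1$.

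I expect the bookkeeping in this last paragraph to be the main obstacle: one must rule out any pattern producing three type-$W$ cells, which would give value $3$ and break the bound. This is exactly excluded by the ``one guess forces one wrong cell'' observation, since three type-$W$ cells would require at least three correct cells together with at most one wrong cell, incompatible with the row/column incidence structure of the grid. Once this verification is complete, dividing by $4$ yields $\prob{\mathcal{S}}\le\max\left\{ h\left(D\right),\,1/2+\left(1/4\right)h\left(D\right)\right\}$, and maximising over $\mathcal{S}$ gives the lemma.
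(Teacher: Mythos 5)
Your proof is correct and takes essentially the same approach as the paper's: both exploit the fact that the two $K_{2}$-vertices see only each other, condition on the $2\times2$ grid of their hat colors, observe that every guess by one of them forces a losing cell in its row or column while the all-pass cells contribute at most $h\left(D\right)$, and then optimize over the finitely many behaviour patterns. The paper merely packages the same count as a trichotomy (one of $x,y$ always guesses / never guesses / each guesses on exactly one view), dispatching the first two cases via Claims~\ref{clm:always-guess}, \ref{clm:never-guess} and~\ref{clm:drop-blind-vertices} rather than your explicit grid bookkeeping.
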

\begin{proof}
Let $\mathcal{S}$ be a strategy for $D\to K_{2}$. Denote the $K_{2}$
vertices by $x$ and $y$, and let us consider the sub-strategy played
by $x$ and $y$.
\begin{caseenv}
\item If at least one of $x$ and $y$ always tries to guess, then $\prob{\mathcal{S}}\le1/2$.
\item If at least one of $x$ and $y$ never guesses its color, without
loss of generality let it be $y$. Then, by Claims~\ref{clm:never-guess}
and~\ref{clm:drop-blind-vertices} we have $\prob{\mathcal{S}}\le h\left(D\to K_{2}-y\right)=h\left(D\to K_{1}\right)=h\left(D\right)$.
\item If both $x$ and $y$ guess their colors sometime, then each one guesses
its color with probability $1/2$ as every one of them has just one
outgoing arc. Hence, with probability at least $1/4$ at least one
is wrong. The chance of success of the strategy $\mathcal{S}$ benefits
from the behavior of the vertices of $D$ only when both $x$ and
$y$ pass, and this happens exactly with probability $1/4$ since
they see different vertices (that is, each other). Since the behavior
of the vertices of $D$ when both $x$ and $y$ pass is a strategy
$\mathcal{S}'$ for $D$, we can bound \[
\prob{\mathcal{S}}\le1/2+\left(1/4\right)\prob{\mathcal{S}'}\le1/2+\left(1/4\right)h\left(D\right).\]

\end{caseenv}
The result is established by taking $\mathcal{S}$ to be an optimal
strategy for $D\to K_{2}$.
\end{proof}
The next lemma proves the lower bound in a more general setting.
\begin{lem}
\label{lem:directed-union-lower-bound}For every positive integer
$m$ there exists $c>0$ such that for any digraph $D$ we have $h\left(D\to K_{m}\right)\ge cm/\left(m+1\right)+\left(1-c\right)h\left(D\right)$.\end{lem}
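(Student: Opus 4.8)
The plan is to exhibit an explicit joint strategy for $D\to K_{m}$ and read the bound off its success probability; the constant will turn out to be $c=(m+1)/2^{m}$. The guiding principle mirrors the upper bound lemma: the $K_{m}$ vertices run a self-contained sub-game (they see only each other, never the $D$ part), while the $D$ vertices, who see the \emph{entire} $K_{m}$ configuration, stand by and take over precisely on the configurations where the $K_{m}$ sub-game deliberately abstains.

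First I would fix the sub-strategy on $K_{m}$. Identifying blue with $0$ and red with $1$, let each $K_{m}$ vertex guess red if it sees all its $m-1$ fellow $K_{m}$ vertices wearing blue, and pass otherwise; this is the radius-$1$ covering-code rule for the single codeword $0^{m}$. A case analysis by the Hamming weight of the $K_{m}$ configuration gives three regimes: on the all-blue configuration every $K_{m}$ vertex guesses red and all are wrong; on each of the $m$ weight-one configurations the unique red vertex guesses correctly while everyone else passes, a win; and on every configuration of weight at least two every vertex sees a red neighbour and so all pass. Thus the $m+1$ configurations of weight at most one form an active region $A$ on which the sub-game wins exactly $m$ of them (rate $m/(m+1)$), and the remaining $2^{m}-(m+1)$ configurations form a region $B$ on which \emph{all} $K_{m}$ vertices pass.

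Next I would let each $D$ vertex read the full $K_{m}$ configuration (it lies in its out-neighbourhood) and branch on it: if that configuration lies in $B$, the $D$ vertices play an optimal strategy for $D$ among themselves, which is legitimate since the sub-digraph induced on the $D$ part is exactly $D$; if it lies in $A$, every $D$ vertex passes. Because the $K_{m}$ configuration and the $D$ configuration are independent and uniform, the two regimes never interfere: on $A$ the $D$ vertices are silent and the outcome is decided by the $K_{m}$ sub-game, while on $B$ the $K_{m}$ vertices are silent and the outcome is decided by the optimal $D$ strategy. Summing the two contributions yields $\prob{\mathrm{win}}=m/2^{m}+\bigl(1-(m+1)/2^{m}\bigr)h(D)$, and rewriting $m/2^{m}=\bigl((m+1)/2^{m}\bigr)\cdot m/(m+1)$ gives exactly $c\,m/(m+1)+(1-c)h(D)$ with $c=(m+1)/2^{m}>0$.

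I expect the only real obstacle to be the design of the $K_{m}$ sub-strategy: one needs a guessing rule whose all-pass set $B$ has positive density yet whose complement is won at the full rate $m/(m+1)$, and the single-codeword covering-code rule above is precisely what makes both requirements hold at once. Everything after that is bookkeeping: verifying the weight-based case analysis, confirming that the induced digraph on the $D$ part is genuinely $D$ so that $h(D)$ is attainable there, and using independence of the two coordinate blocks to add the contributions without cross terms.
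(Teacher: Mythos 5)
Your proposal is correct and is essentially identical to the paper's proof: the same single-codeword covering rule on $K_{m}$, the same branching of the $D$ vertices on whether the $K_{m}$ configuration has at most one red hat, and the same constant $c=\left(m+1\right)2^{-m}$. No substantive differences.
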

\begin{proof}
Let $\mathcal{S}$ be an optimal strategy for the digraph $D$. We
describe a strategy $\mathcal{S}'$ for the digraph $D\to K_{m}$.
Denote the vertices of $K_{m}$ by $x_{1},x_{2},\ldots,x_{m}$. 
\begin{enumerate}
\item \label{item:strategy-of-D}The vertices of $D$ pass if at most one
of $\left\{ x_{1},\ldots,x_{m}\right\} $ has a red hat; otherwise,
they behave according to the strategy $\mathcal{S}$.
\item \label{item:strategy-of-Km}For $i=1,\ldots,m$, the vertex $x_{i}$
can see the $m-1$ vertices $\left\{ x_{j}:j\neq i\right\} $. If
all of them have blue hats, then $x_{i}$ guesses red; otherwise it
passes.
\end{enumerate}
If $x_{1},\ldots,x_{m}$ all have a blue hat, then they all guess
wrong. If exactly one of them, $x_{i}$, had a red hat, then $x_{i}$
guesses correctly and all other vertices pass. All in all, conditioned
on the event $\mathcal{A}$ that at most one of $x_{1},\ldots,x_{m}$
has a red hat, the team wins with probability $m/\left(m+1\right)$.
Let $c=\prob{\mathcal{A}}=\prob{\mathrm{Bin}\left(m,1/2\right)\le2}=\left(m+1\right)2^{-m}$.
We have \[
\prob{\mathcal{S}'}=\prob{\mathcal{A}}m/\left(m+1\right)+\left(1-\prob{\mathcal{A}}\right)\prob{\mathcal{S}}=cm/\left(m+1\right)+\left(1-c\right)h\left(D\right),\]
establishing the result.\end{proof}
\begin{rem*}
In the proof of Lemma~\ref{lem:directed-union-lower-bound}, $c$
approaches zero very quickly as $m$ grows. In fact, we can have $c\ge1/2$
by using a slightly more complicated strategy. Let $C\subset\left\{ \textrm{blue},\textrm{red}\right\} ^{m}$
be a code of distance 3, and consider the packing of stars $K_{1,m}$
in the hypercube graph $H_{m}$ formed by selecting balls of radius
one around each codeword. 

The event $\mathcal{A}$ is now defined as the event that the configuration
of $x_{1},\ldots,x_{m}$ is covered by the packing. Step~\ref{item:strategy-of-D}
stays basically the same: the vertices of $D$ all pass if $\mathcal{A}$
occurred and behave according to $\mathcal{S}$ otherwise. Step~\ref{item:strategy-of-Km}
is generalized to make use of the entire packing: if $x_{i}$ sees
a configuration consistent with some codeword, it guesses the color
that disagrees with it. As before, when $\mathcal{A}$ occurs either
$m$ vertices guess wrong or exactly one guesses, and is correct.

Now the existence of codes of distance 3, length $m$, and size $\ceil{2^{m-1}/\left(m+1\right)}$
implies that $c\ge1/2$.
\end{rem*}
\bigskip{}

We use Lemmata~\ref{lem:directed-union-upper-bound} and~\ref{lem:directed-union-lower-bound}
to calculate the hat number of $D_{n}$.
\begin{cor}
For every non-negative integer $n$ we have \[
h\left(D_{n}\right)=\frac{2}{3}-\frac{1}{6}\cdot\frac{1}{4^{n}}.\]
\end{cor}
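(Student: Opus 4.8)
The plan is to prove the formula by induction on $n$, exploiting the recurrence $D_{n+1} = D_n \to K_2$ together with Lemmata~\ref{lem:directed-union-upper-bound} and~\ref{lem:directed-union-lower-bound} specialized to $m = 2$. For the base case $n = 0$, note that $D_0 = K_1$ is a single vertex with no outgoing arcs: a lone player can only pass (winning with probability $0$) or commit to a fixed guess (winning with probability $1/2$), so $h(D_0) = h(K_1) = 1/2$, which agrees with $2/3 - (1/6)\cdot 1 = 1/2$.

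Next I would extract a clean exact recurrence for $h(D \to K_2)$ that is valid whenever $h(D) \le 2/3$. Substituting $m = 2$ into Lemma~\ref{lem:directed-union-lower-bound} gives $c = (m+1)2^{-m} = 3/4$, so the lower bound reads $h(D \to K_2) \ge (3/4)(2/3) + (1/4)h(D) = 1/2 + (1/4)h(D)$. On the other hand, Lemma~\ref{lem:directed-union-upper-bound} supplies $h(D \to K_2) \le \max\{h(D),\, 1/2 + (1/4)h(D)\}$. These two bounds coincide exactly when $h(D) \le 1/2 + (1/4)h(D)$, that is, when $h(D) \le 2/3$; in that regime the maximum is attained by its second argument, yielding the identity $h(D \to K_2) = 1/2 + (1/4)h(D)$.

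For the inductive step I would assume $h(D_n) = 2/3 - (1/6)\cdot 4^{-n}$. Since this value is strictly below $2/3$, the hypothesis $h(D_n) \le 2/3$ needed above is satisfied, so $h(D_{n+1}) = h(D_n \to K_2) = 1/2 + (1/4)h(D_n)$. A one-line computation then gives $1/2 + (1/4)\bigl(2/3 - (1/6)4^{-n}\bigr) = 2/3 - (1/6)4^{-(n+1)}$, which closes the induction.

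The only point requiring care — more a bookkeeping check than a genuine obstacle — is ensuring that the $\max$ in the upper-bound lemma resolves to the correct branch at every step, so that the upper and lower bounds lock together into an equality rather than merely sandwiching $h(D_n)$. This amounts to maintaining the invariant $h(D_n) \le 2/3$ along the induction, and since $2/3 - (1/6)4^{-n} < 2/3$ for all $n$, the invariant is preserved automatically and no additional case analysis is needed.
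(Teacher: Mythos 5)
Your proof is correct and follows essentially the same route as the paper: induction on $n$ with base case $h(D_0)=1/2$, combining Lemma~\ref{lem:directed-union-upper-bound} and Lemma~\ref{lem:directed-union-lower-bound} (with $m=2$, $c=3/4$) into the exact recurrence $h(D_{n})=1/2+(1/4)h(D_{n-1})$, using $h(D_{n-1})<2/3$ to resolve the maximum. Your explicit attention to which branch of the $\max$ is taken matches the paper's (briefer) remark to the same effect.
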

\begin{proof}
We prove the result by induction on the number $n$. For $n=0$ the
claim is obviously true as $D_{0}$ is a single isolated vertex and
$h\left(D_{0}\right)=1/2=2/3-1/6$. Let $n$ be a positive integer,
and assume that $h\left(D_{n-1}\right)=2/3-4^{1-n}/6$. Since $h\left(D_{n-1}\right)<2/3$,
by Lemma~\ref{lem:directed-union-upper-bound} we have \[
h\left(D_{n}\right)\le\max\left\{ h\left(D_{n-1}\right),1/2+\left(1/4\right)h\left(D_{n-1}\right)\right\} =1/2+\left(1/4\right)h\left(D_{n-1}\right).\]
This is matched by Lemma~\ref{lem:directed-union-lower-bound}, which
gives $h\left(D_{n}\right)\ge\left(3/4\right)\left(2/3\right)+\left(1/4\right)h\left(D_{n-1}\right)$.
Therefore \[
h\left(D_{n}\right)=\frac{1}{2}+\frac{1}{4}h\left(D_{n-1}\right)=\frac{1}{2}+\frac{1}{4}\left(\frac{2}{3}-\frac{1}{6}\cdot\frac{1}{4^{n-1}}\right)=\frac{2}{3}-\frac{1}{6}\cdot\frac{1}{4^{n}}\]
 and the result is established. 
\end{proof}
We have just proved the following.
\begin{thm}
\label{thm:2/3-minus-epsilon}For every $\varepsilon>0$ there exists
a digraph $D$ satisfying $\omega\left(D\right)=2$ such that $h\left(D\right)>2/3-\varepsilon$.
\end{thm}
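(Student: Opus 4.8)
The plan is to read the result off directly from the preceding Corollary, so the argument is essentially bookkeeping. The family $\left\{ D_{n}\right\} $ already does all the heavy lifting: the Corollary establishes $h\left(D_{n}\right)=2/3-\left(1/6\right)\cdot4^{-n}$, a quantity that increases monotonically to $2/3$ as $n\to\infty$. Hence for any prescribed $\varepsilon>0$ it suffices to take $n$ large enough that the error term $\left(1/6\right)\cdot4^{-n}$ drops below $\varepsilon$, and then to exhibit $D=D_{n}$ as the desired digraph.

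First I would fix $\varepsilon>0$ and solve $\left(1/6\right)\cdot4^{-n}<\varepsilon$ for $n$; this rearranges to $4^{n}>1/\left(6\varepsilon\right)$, so any integer $n>\log_{4}\bigl(1/\left(6\varepsilon\right)\bigr)$ suffices. Second, I would verify the clique constraint. The skeleton of $D_{n}=K_{1}\to K_{2}^{\to n}$ consists of precisely the $n$ anti-parallel pairs lying inside the $K_{2}$ copies, since every arc introduced by the directed-union operator is one-directional and therefore contributes no edge to the skeleton. Thus $\skel{D_{n}}=nK_{2}\cup K_{1}$, whose largest clique is a single edge, giving $\omega\left(D_{n}\right)=2$ for every $n\ge1$. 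Taking $D=D_{n}$ for any integer $n\ge\max\bigl\{1,\ \ceil{\log_{4}\bigl(1/\left(6\varepsilon\right)\bigr)}+1\bigr\}$ then delivers simultaneously $\omega\left(D\right)=2$ and $h\left(D\right)=2/3-\left(1/6\right)\cdot4^{-n}>2/3-\varepsilon$, which is exactly the claim.

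There is no genuine obstacle at this stage: the entire substance of the theorem is carried by the Corollary, which in turn rests on Lemmata~\ref{lem:directed-union-upper-bound} and~\ref{lem:directed-union-lower-bound}. The only two points demanding any care are the side condition $n\ge1$ (without it $D_{0}=K_{1}$ has clique number $1$ rather than $2$, so the exceptional base case must be excluded) and the elementary observation that the directed-union arcs, being non-anti-parallel, leave the skeleton a disjoint matching and hence keep the clique number pinned at $2$ independently of how large $n$ grows. Once these are noted, the conclusion is immediate.
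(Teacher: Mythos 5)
Your proof is correct and follows exactly the route the paper takes: the paper derives the theorem immediately from the Corollary $h\left(D_{n}\right)=2/3-\left(1/6\right)\cdot4^{-n}$ together with the observation (relegated to a footnote) that $\skel{D_{n}}=nK_{2}\cup K_{1}$, hence $\omega\left(D_{n}\right)=2$. Your added care about excluding $n=0$ and about why the one-directional union arcs do not enlarge the skeleton is sound and merely makes explicit what the paper leaves implicit.
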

Theorem~\ref{thm:2/3-minus-epsilon} can be generalized to an arbitrary
clique number $m$.
\begin{thm}
For every $\varepsilon>0$ there exists a digraph $D$ satisfying
$\omega\left(D\right)=m$ such that $h\left(D\right)>m/\left(m+1\right)-\varepsilon$.\end{thm}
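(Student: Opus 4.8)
The plan is to mimic exactly the two-sided argument that established Theorem~\ref{thm:2/3-minus-epsilon}, but with $K_2$ replaced by $K_m$ throughout. Define the family $D_n = K_1 \to K_m^{\to n}$, so that $\skel{D_n}$ is a disjoint union of $n$ copies of $K_m$ plus an isolated vertex, giving $\omega(D_n) = m$ for every $n \ge 1$. The recurrence is $D_{n+1} = D_n \to K_m$, and the goal is to show $h(D_n) \to m/(m+1)$ as $n \to \infty$; then taking $n$ large enough that $m/(m+1) - h(D_n) < \varepsilon$ produces the required digraph.

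For the lower bound I would invoke Lemma~\ref{lem:directed-union-lower-bound} directly: it already handles directed union with $K_m$ for arbitrary $m$, yielding
\[
h(D_n) \ge c\,\frac{m}{m+1} + (1-c)\,h(D_{n-1})
\]
for a fixed constant $c = c(m) > 0$. Iterating this affine recursion shows that $h(D_n)$ is a convex combination pushing toward the fixed point $m/(m+1)$, with the gap shrinking geometrically by a factor $(1-c)$ at each step; hence $\lim_n h(D_n) = m/(m+1)$. This is the easy direction, since the lemma is stated in exactly the generality needed.

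The upper bound is the main obstacle, because Lemma~\ref{lem:directed-union-upper-bound} is stated only for directed union with $K_2$. I would therefore prove a generalization: for any digraph $D$,
\[
h(D \to K_m) \le \max\bigl\{\, h(D),\ \alpha_m + \beta_m\, h(D)\,\bigr\},
\]
where $\beta_m = 2^{-m}$ is the probability that all $m$ vertices of $K_m$ pass and $\alpha_m = m/(m+1)$ bounds the contribution of the $K_m$-vertices. The proof follows the same three-case structure: if some $K_m$-vertex always guesses, then $\prob{\mathcal{S}} \le 1/2$ by Claim~\ref{clm:always-guess}; if some $K_m$-vertex never guesses, drop it via Claims~\ref{clm:never-guess} and~\ref{clm:drop-blind-vertices} and induct on the size of the clique; otherwise every $K_m$-vertex guesses at least sometimes, each sees the other $m-1$ vertices, and one bounds the success probability by the optimal single-clique contribution $m/(m+1)$ plus the $\beta_m$-weighted contribution of $D$ conditioned on all $K_m$-vertices passing. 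The delicate point is the middle and final cases: when a clique vertex is dropped the remaining subproblem is $D \to K_{m-1}$ rather than $D$, so the clean two-way maximum of the $m=2$ case becomes a nested induction over the clique size, and I would need to check that the bound $\alpha_m + \beta_m h(D)$ dominates the intermediate quantities $\alpha_j + \beta_j h(D)$ for $j < m$ whenever $h(D)$ is below the relevant fixed point.

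Finally, combining the generalized upper and lower bounds at each step of the recurrence $D_n = D_{n-1} \to K_m$ gives matching asymptotics: once $h(D_{n-1}) < m/(m+1)$, the upper bound reads $h(D_n) \le \alpha_m + \beta_m h(D_{n-1})$, which coincides with the lower-bound recursion up to the geometric decay rate, forcing $h(D_n) \to m/(m+1)$. I expect the routine part to be the convergence computation and the genuinely new work to be the clean statement and proof of the $K_m$ upper bound, together with verifying that the induction on clique size closes without the max-term for a smaller clique ever exceeding the target.
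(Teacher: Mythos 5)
Your lower-bound argument is exactly the paper's proof: the paper takes $D=K_{m}^{\to n}$ with $n=\ceil{\log_{1-c}\left(\varepsilon\right)}$ and iterates Lemma~\ref{lem:directed-union-lower-bound} to obtain $h\left(D\right)\ge\left(1-\left(1-c\right)^{n}\right)m/\left(m+1\right)>m/\left(m+1\right)-\varepsilon$; your $K_{1}\to K_{m}^{\to n}$ differs only by a harmless extra vertex. The substantive issue with your proposal is that everything after that is unnecessary: the theorem asserts only a lower bound on $h\left(D\right)$, so no analogue of Lemma~\ref{lem:directed-union-upper-bound} for $K_{m}$ is required, and the step you call ``the main obstacle'' and ``the genuinely new work'' plays no role. (You also leave that part unfinished --- the nested induction over clique sizes and the claim that $\alpha_{m}+\beta_{m}h\left(D\right)$ dominates the intermediate terms $\alpha_{j}+\beta_{j}h\left(D\right)$ are only sketched --- so had it actually been needed, the proof would have a gap.) The matching upper bound $h\left(D\right)\le m/\left(m+1\right)$, which is what makes $m/\left(m+1\right)$ asymptotically optimal, is established in the paper by an entirely different route, namely Proposition~\ref{prop:hat-number-upper-bound}, a degree-counting argument on a bipartite graph of winning versus losing configurations that applies to every digraph of clique number $m$; it is not obtained by generalizing the $K_{2}$ recursion. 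Strip your proposal down to its first two paragraphs and it is precisely the paper's proof.
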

\begin{proof}
Let us consider $D=K_{m}^{\to n}$, where $n=\ceil{\log_{1-c}\left(\varepsilon\right)}$
and $c$ is the appropriate constant from Lemma~\ref{lem:directed-union-lower-bound}.
By repeatedly applying the lemma we get that \[
h\left(D\right)\ge\left(1-\left(1-c\right)^{n}\right)m/\left(m+1\right)\ge\left(1-\varepsilon\right)m/\left(m+1\right)>m/\left(m+1\right)-\varepsilon,\]
as needed.
\end{proof}
\medskip{}
The natural question to ask is whether $m/\left(m+1\right)$ is the
best possible hat number of such digraphs. In the following section
we show that indeed this is the best possible, i.e., that the chance
of success $m/\left(m+1\right)$ is asymptotically optimal for digraphs
with clique number $m$.

\section{\label{sec:upper-bound}The upper bound}

Feige~\cite{Feige09+} proved that for every undirected graph $G$
we have $h\left(G\right)\le\omega\left(G\right)/\left(\omega\left(G\right)+1\right)$.
We repeat his proof, refining it a bit to show that the same holds
for digraphs.
\begin{prop}
\label{prop:hat-number-upper-bound}For every digraph $D$ we have
$h\left(D\right)\le\omega\left(D\right)/\left(\omega\left(D\right)+1\right)$.\end{prop}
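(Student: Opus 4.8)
The plan is to follow Feige's counting argument, recast as a charging map on the Boolean cube, and to check that the only place where the orientation of arcs enters causes no harm. Write $m=\omega\left(D\right)$ and fix a maximum clique $C=\left\{ v_{1},\dots,v_{m}\right\}$ of $\skel D$; by definition every vertex of $C$ has outgoing arcs to all the others, so each $v_{i}\in C$ sees the remaining $m-1$ clique vertices. Regarding configurations as vertices of the hypercube $\left\{ \textrm{blue},\textrm{red}\right\} ^{n}$, let $W$ and $L$ be the winning and losing configurations for a fixed optimal strategy. Since $\left|W\right|+\left|L\right|=2^{n}$, the bound $h\left(D\right)\le m/\left(m+1\right)$ is equivalent to $\left|W\right|\le m\left|L\right|$, and I would prove the latter by exhibiting a map $\phi\colon W\to L$ under which every loss has at most $m$ preimages.

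The one local observation that actually uses the digraph definitions is that flipping the color at a single vertex $v$ leaves the view $c^{v}=c|_{N^{+}\left(v\right)}$ unchanged, because a vertex never sees itself; hence $v$ makes the very same guess on $c$ and on $c\oplus e_{v}$. Consequently, if $v$ guesses correctly on a configuration $c$, then $v$ guesses wrong on $c\oplus e_{v}$, so $c\oplus e_{v}\in L$. This lets me define $\phi$ by flipping, at each winning configuration, the coordinate of some correctly guessing vertex, chosen inside the clique $C$ whenever that is possible. For the preimage count, observe that if $\phi\left(w\right)=\ell$ because coordinate $k$ was flipped, then $k$ guesses wrong on $\ell$; restricting the choice to clique vertices would mean that each loss is reached only along the $m$ directions $e_{v_{1}},\dots,e_{v_{m}}$, giving at most $m$ preimages and the desired inequality \emph{provided} every winning configuration admits a correct guesser in $C$.

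The main obstacle is precisely that proviso. As elementary examples already show, an optimal strategy may well contain winning configurations on which every clique vertex passes and the only correct guesses come from outside $C$; on these $\phi$ is forced to flip a non-clique coordinate, and a priori a single loss could be charged along many of the $n-m$ outside directions, collapsing the estimate to the trivial $n/\left(n+1\right)$. I would organize the argument as a flow on the cube, sending one unit from each win $w$ to $w\oplus e_{v}$ for every clique vertex $v$ that is correct on $w$; since each loss receives at most one unit per clique direction, this cleanly disposes of all winning configurations that do have a correct guesser in $C$. The delicate remaining step is to absorb the configurations on which all of $C$ passes at an amortized cost of at most $m$ per loss, and here I expect to lean on the maximality of $C$: every vertex $u\notin C$ fails to form a directed $2$-cycle with some clique vertex, which limits how the guesses of outside vertices can respond to flips and should bound the fan-in along the outside directions. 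I anticipate that it is this bookkeeping, rather than any single inequality, that constitutes the real content of the refinement over the undirected case.
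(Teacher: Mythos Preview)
Your framework is right but the pivotal step is misplaced, and the obstacle you identify is self-inflicted. By fixing a single maximum clique $C$ in advance and insisting that $\phi$ flip a coordinate in $C$ whenever possible, you create precisely the difficulty you then try to manage: winning configurations on which every $v_{i}\in C$ passes. No amount of bookkeeping with the maximality of $C$ will rescue this, because nothing prevents an optimal strategy from having all correct guesses at some $w$ come from a clique of $\skel D$ disjoint from $C$.

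The fix is to drop the global clique entirely and let the clique be \emph{local to each loss}. Define $\phi\left(w\right)=w\oplus e_{v}$ for an arbitrary correct guesser $v$ at $w$. For the preimage count, take an actively losing $\ell$ and let $v_{1},\dots,v_{d}$ be the coordinates along which $\ell$ is reached, so each $v_{i}$ is wrong on $\ell$ and correct on $w_{i}=\ell\oplus e_{v_{i}}$. Now observe that if the arc $v_{i}\to v_{j}$ were absent from $D$, then $v_{i}$ cannot distinguish $\ell$ from $w_{j}=\ell\oplus e_{v_{j}}$, so $v_{i}$ would still guess wrong on $w_{j}$, contradicting $w_{j}\in W$. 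Hence every ordered pair among $v_{1},\dots,v_{d}$ is an arc, i.e., $\left\{ v_{1},\dots,v_{d}\right\}$ is a clique of $\skel D$, and $d\le\omega\left(D\right)$. That is the whole argument; the refinement over the undirected case is just checking both orientations of the arc, and there is no residual bookkeeping. (A minor side point: with your convention $\left|W\right|+\left|L\right|=2^{n}$ you are including all-pass configurations in $L$; this is harmless for the inequality, since $\phi$ lands in the actively-losing subset anyway.)
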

\begin{proof}
Let $\mathcal{S}$ be an optimal strategy for $D$. By $W_{\mathcal{S}}$
let us denote the set of configurations in which the team wins using
the strategy $\mathcal{S}$ and by $L_{\mathcal{S}}$ let us denote
the set of configurations in which the team actively loses using the
strategy $\mathcal{S}$, that is, configurations in which $\mathcal{S}$
causes at least one wrong guess.

We define a bipartite graph $B$ whose left-hand side is $L_{\mathcal{S}}$,
and right-hand side is $W_{\mathcal{S}}$. A~losing configuration
$l\in L_{\mathcal{S}}$ is adjacent to a winning configuration $w\in W_{\mathcal{S}}$
if they differ only by one coordinate, which is the hat color of a
vertex $v\in V\left(G\right)$ that attempted to guess at these configurations.%
\footnote{Since $v$ cannot see its own hat color, it acts the same in both
hat configurations $l$ and $w$.%
} Let us examine the right and the left degrees in $B$.

\paragraph{Right degree.}

Let $w\in W_{\mathcal{S}}$ be a winning configuration, and let $v\in V\left(D\right)$
be a vertex that guesses correctly at $w$. Let $l$ be a hat configuration
identical to $w$ except in coordinate $v$. Since $v$ does not see
any difference between $w$ and $l$, it makes the same guess in $l$,
but now it is incorrect.

Therefore $l\in L_{\mathcal{S}}$ is a neighbor of $w$ in $B$, and
$d\left(w\right)\ge1$.

\paragraph{Left degree.}

Let $l\in L_{\mathcal{S}}$ be a losing configuration, and let $w_{1},\ldots,w_{d}\in W_{\mathcal{S}}$
be its neighbors in $B$, where $d=d\left(l\right)$. For every $i=1,\ldots,d$
let $v_{i}\in V\left(D\right)$ be the coordinate at which $l$ and
$w_{i}$ differ.

Assume for the sake of contradiction that some arc $v_{i}\to v_{j}$
is not present in $D$. By the definition of $v_{i}$, it makes a
correct guess at the configuration $w_{i}$. It cannot tell $w_{i}$
apart from $l$, and thus it makes the same, now wrong, guess at the
configuration $l$. But then it must make the same incorrect guess
at the configuration $w_{j}$, which only differs from $l$ by the
color of $v_{j}$, unseen by $v_{i}$. This contradicts the fact that
$w_{j}$ is a a winning configuration. 

Therefore $\left\{ v_{i}\right\} _{i=1}^{d}$ is a clique in $\skel D$
and $d=d\left(l\right)\le\omega\left(\skel D\right)=\omega\left(D\right)$.

\bigskip{}
We have shown that the right degree in $B$ is at least one and the
left degree in $B$ is at most $\omega\left(D\right)$. This implies
that $\left|W_{\mathcal{S}}\right|\le\left|E\left(B\right)\right|\le\omega\left(D\right)\left|L_{\mathcal{S}}\right|$
and consequently \[
h\left(D\right)=\prob{\mathcal{S}}=\left|W_{\mathcal{S}}\right|\cdot2^{-\left|V\left(D\right)\right|}\le\left|W_{\mathcal{S}}\right|/\left(\left|W_{\mathcal{S}}\right|+\left|L_{\mathcal{S}}\right|\right)\le\omega\left(D\right)/\left(\omega\left(D\right)+1\right),\]
establishing the result.\end{proof}
\begin{rem*}
Observe that for a digraph $D$, the hat number $h\left(D\right)$
is always a rational number whose denominator is a power of two. Thus,
$h\left(D\right)<\omega\left(D\right)/\left(\omega\left(D\right)+1\right)$
unless $\omega\left(D\right)+1$ is a power of two.%
\footnote{When $\omega\left(D\right)+1=2^{k}$ is a power of two, the upper
bound is met by a complete graph $K_{2^{k}-1}$ as $h\left(K_{2^{k}-1}\right)=\left(2^{k}-1\right)/2^{k}$.%
}\end{rem*}
\begin{cor}
For every tournament $T$ we have $h\left(T\right)=1/2$.\end{cor}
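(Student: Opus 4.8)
The plan is to combine the general upper bound from Proposition~\ref{prop:hat-number-upper-bound} with the universal lower bound from Claim~\ref{clm:at-least-half}, so the only real content is identifying the clique number of a tournament. Recall that a tournament $T$ is an orientation of a complete graph: between every pair of distinct vertices there is exactly one arc, never both. Thus no two vertices of $T$ form a directed $2$-cycle, which means that $\skel T$ has no edges at all.

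First I would observe that since $\skel T$ is edgeless, its largest clique is a single vertex, so $\omega\left(T\right)=\omega\left(\skel T\right)=1$. Plugging $\omega\left(T\right)=1$ into Proposition~\ref{prop:hat-number-upper-bound} yields the upper bound
\[
h\left(T\right)\le\frac{\omega\left(T\right)}{\omega\left(T\right)+1}=\frac{1}{2}.
\]
Next I would invoke Claim~\ref{clm:at-least-half}, which guarantees $h\left(T\right)\ge1/2$ for every digraph, in particular for $T$. Combining the two inequalities forces $h\left(T\right)=1/2$, which is exactly the claim.

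There is essentially no obstacle here; the corollary is immediate once one notices that a tournament, having at most one arc between any two vertices, can never contain a directed $2$-cycle and hence has an edgeless skeleton. The only point that warrants a sentence of care is to state explicitly which definition of tournament is in force (an orientation of $K_n$, with no anti-parallel arcs), so that the equality $\omega\left(T\right)=1$ is justified rather than merely asserted. After that, the result is a one-line consequence of the sandwiching $1/2\le h\left(T\right)\le1/2$.
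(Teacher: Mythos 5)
Your proposal is correct and follows exactly the paper's argument: apply Proposition~\ref{prop:hat-number-upper-bound} with $\omega\left(T\right)=1$ for the upper bound and Claim~\ref{clm:at-least-half} for the lower bound. The only difference is that you spell out why a tournament's skeleton is edgeless, which the paper leaves implicit.
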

\begin{proof}
Apply Proposition~\ref{prop:hat-number-upper-bound} with $\omega\left(T\right)=1$.
The lower bound is by Claim~\ref{clm:at-least-half}.
\end{proof}
{}

\subsection*{Acknowledgments}

The authors thank Noga Alon, Uriel Feige, and Po-Shen Loh for useful
discussions and comments.


\begin{thebibliography}{16}
\bibitem{Alon08}N.~Alon, \emph{Problems and results in extremal
combinatorics {II}}, Discrete Math.~308, pp.~4460--4472, 2008.

\bibitem{Berkeley}\textit{Berkeley Riddles}\textit{\emph{,}} \url{http://www.ocf.berkeley.edu/~wwu/riddles/hard.shtml}. 

\bibitem{Blum-diezeit}W.~Blum, \textit{Denksport für Hutträger,}
Die Zeit, May 3, 2001. 

\bibitem{BHKL08}S.~Butler, M.T.~Hajiaghayi, R.D.~Kleinberg and
T.~Leighton, \emph{Hat guessing games}, SIAM J. Discrete Math.~22(2),
pp.~592--605, 2008.

\bibitem{CHLL-book}G.~Cohen, I.~Honkala, S.~Litsyn, and A.~Lobstein,
\textbf{\textit{\emph{Covering Codes}}}\textit{\emph{,}} North Holland,
1997. 

\bibitem{Ebert-phd}T.~Ebert, \textit{Applications of recursive operators
to randomness and complexity}\textit{\emph{,}} Ph.D.~thesis, UC Santa
Barbara, 1998. 

\bibitem{EMV03}T.~Ebert, W.~Merkle, and H.~Vollmer, \textit{On
the autoreducibility of random sequences}\textit{\emph{,}} SIAM J.~Comput.~32,
pp.~1542--1569, 2003.

\bibitem{Feige04}U.~Feige, \textit{You can leave your hat on (if
you guess its color)}\textit{\emph{,}} Technical Report MCS04-03,
Computer Science and Applied Mathematics, The Weizmann Institute of
Science, 2004. 

\bibitem{Feige09+}U.~Feige, \emph{On optimal strategies for a hat
game on graphs}, submitted.

\bibitem{GKRT07}W.~Guo, S.~Kasala, M.~Rao, and B.~Tucker, \textit{The
hat problem and some variations}\textit{\emph{, in:}} \textbf{Advances
in distribution theory, order statistics, and inference}, pp.~459--479,
Statistics for Industry and Technology, Birkhäuser Boston, 2007. 

\bibitem{Krzy-trees}M.~Krzywkowski, \textit{Hat problem on a graph}\textit{\emph{,}}
Math.~Pannon.~21(1), pp.~1--10, 2010. 

\bibitem{Krzy-C4}M.~Krzywkowski, \emph{Hat problem on the cycle
$C_{4}$}, Internat.~Math.~Forum~5, pp.~205--210, 2010. 

\bibitem{Krzy-C5}M.~Krzywkowski, \emph{Hat problem on the cycle
$C_{5}$}, submitted.

\bibitem{LS02}H.W.~Lenstra and G.~Seroussi, \textit{On hats and
other covers,} IEEE International Symposium on Information Theory,
Lausanne, 2002. 

\bibitem{Pol-abcnews}J.~Poulos, \textit{Could you solve this \$1
million hat trick?,} abcNews, November 29, 2001. 

\bibitem{Rob-nytimes}S.~Robinson, \emph{Why mathematicians now care
about their hat color}, The New York Times, Science section, April
10, 2001.\url{http://nytimes.com/2001/04/10/science/10MATH.html}
\end{thebibliography}
\end{document}